\newtheorem{theorem}{Theorem}
\newtheorem{lemma}{Lemma}
\newtheorem{proposition}{Proposition}
\newtheorem{corollary}{Corollary}
\newtheorem{property}{Property}
\newtheorem{remark}{Remark}
\newtheorem{claim}{Claim}
\newtheorem{assumption}{Assumption}
\begin{document}
\title{{Unit-Modulus Wireless Federated Learning\\Via Penalty Alternating Minimization}}

\author{
\small
Shuai~Wang$^{*,\diamond}$, Dachuan Li$^{\diamond}$, Rui~Wang$^*$, Qi Hao$^{\diamond}$, Yik-Chung~Wu$^{\dag}$, and Derrick~Wing~Kwan~Ng$^{\star}$\\
$^*$Department of Electrical and Electronic Engineering, Southern University of Science and Technology, China\\
$^\diamond$Department of Computer Science and Engineering, Southern University of Science and Technology, China\\
$^\dag$Department of Electrical and Electronic Engineering, University of Hong Kong, Hong Kong\\
$^\star$School of Electrical Engineering and Telecommunications, University of New South Wales, Australia\\
E-mail: wangs3@sustech.edu.cn, dachuanli86@gmail.com, wang.r@sustech.edu.cn,\\hao.q@sustech.edu.cn, ycwu@eee.hku.hk, w.k.ng@unsw.edu.au
\thanks{
This paper has been accepted for publication in IEEE Global Communications Conference 2021.
}
}

\maketitle

\vspace{-0.2in}
\begin{abstract}
Wireless federated learning (FL) is an emerging machine learning paradigm that trains a global parametric model from distributed datasets via wireless communications.
This paper proposes a unit-modulus wireless FL (UMWFL) framework, which simultaneously uploads local model parameters and computes global model parameters via optimized phase shifting.
The proposed framework avoids sophisticated baseband signal processing, leading to both low communication delays and implementation costs.
A training loss bound is derived and a penalty alternating minimization (PAM) algorithm is proposed to minimize the nonconvex nonsmooth loss bound.
Experimental results in the Car Learning to Act (CARLA) platform show that the proposed UMWFL framework with PAM algorithm achieves smaller training losses and testing errors than those of the benchmark scheme.
\end{abstract}

\IEEEpeerreviewmaketitle

\section{Introduction}

Federated learning (FL) is a promising technique to reduce the communication overhead while protecting the data privacy at users for effective machine learning \cite{fed1}.
Although FL was originally developed for wire-line connected systems \cite{fed1}, the development of mobile intelligent systems such as autonomous driving \cite{zijian} calls for wireless connections between the server and users, giving rise to a new paradigm termed wireless FL or edge FL \cite{fed2,fed3,fed4,air2,air3,air4,air5}.
However, the convergence of wireless FL may require an exceedingly long time due to limited capacity of wireless channels during the uplink model aggregation step.
To reduce the transmission delay, various wireless FL designs have been proposed, which are mainly categorized into digital modulation \cite{fed2,fed3,fed4} and analog modulation \cite{air2,air3,air4,air5} methods.

For digital modulation systems, data from different users are multiplexed either in the time or the frequency domain \cite{massive}.
Current works on delay reduction focus on reducing 1) the number of model aggregation iterations \cite{fed2}, 2) the number of users \cite{fed3}, or 3) the number of bits for representing the gradient of back propagation in each iteration \cite{fed4}.
However, since these strategies involve various approximations of the FL procedure, the performance of learning would be degraded inevitably.
On the other hand, the key advantage of analog modulation \cite{air2,air3,air4,air5} over digital modulation arises from the ground-breaking idea of over-the-air computation (AirComp).
Specifically, if multiple users upload their local parameters simultaneously, a superimposed signal, which represents a weighted sum of individual model parameters, is observed at the edge server.
By performing the minimum mean square error (MMSE) detection on the superimposed signal, an estimate of the global parameter vector can be obtained, thereby significantly shortening the required transmission time.
However, due to channel fading and noise in wireless systems, AirComp employed in single-antenna systems \cite{air2,air3} could result in large error in the estimation of global model parameters at the edge server, leading to slow convergence of FL iterations.
As a remedy, adopting MIMO beamforming \cite{air4,air5} could reduce the parameter transmission error by aligning the beams carrying the local parameters' information to the same desired spatial direction.
Nonetheless, the current transmit and receive beamforming designs in MIMO AirComp systems involve exceedingly high radio frequency (RF) chain costs and high computational complexities \cite{air4,air5}, preventing their practical implementation.

To fill the research gap, this paper proposes the unit-modulus wireless FL (UMWFL) framework which integrates AirComp and unit-modulus analog beamforming.
Specifically, the edge users possess a number of training data for local model updates.
Then, the trained model parameters are uploaded to the server via analog modulation.
To reduce the implementation costs, the edge server does not process the received model parameters at the baseband.
Instead, it applies a phase shift network in the RF domain and an RF sampling module to connect received antennas and transmit antennas for global model updates.
Upon receiving the broadcast, all users feed the received signals to an analog demodulator for parameter extraction.
Note that our UMWFL framework can significantly reduce the required RF chains in MIMO FL systems, thereby reducing the associated hardware and energy costs.
The proposed UMWFL framework also alleviates the straggler effects in both model uploading and broadcasting procedures.
On the other hand, despite the UMWFL problem being highly nonconvex, a large-scale optimization algorithm, termed penalty alternating minimization (PAM), is developed.
Experimental results show that the learning performance of the proposed PAM significantly outperforms the benchmark scheme without phase optimization.

\emph{Notation}: Italic letters, lowercase and uppercase bold letters represent scalars, vectors, and matrices, respectively.
Curlicue letters stand for sets and $|\cdot|$ is the cardinality of a set.
The operators $\|\cdot\|_2,(\cdot)^{T},(\cdot)^{H},(\cdot)^{-1}$ are the $\ell_2$-norm, transpose, Hermitian, and inverse of a matrix, respectively.
The operators $\partial f$ and $\nabla f$ are the partial derivative and the gradient of the function $f$.
The operators $\mathrm{vec}$ and $\mathrm{mat}$ denote the vectorization of a matrix and the matricization of a vector.
The function $[x]^+=\mathrm{max}(x,0)$, $\mathrm{Re}(x)$ takes the real part of $x$, $\mathrm{Im}(x)$ takes the imaginary part of $x$, $\mathrm{conj}(x)$ takes the conjugate of $x$, and $|x|$ is the modulus of $x$.
$\mathbf{I}_{N}$ denotes the $N\times N$ identity matrix and $\otimes$ denotes the Kronecker product.
Finally, $\mathrm{j}=\sqrt{-1}$, $\mathbb{E}(\cdot)$ denotes the expectation of a random variable and $\mathcal{O}(\cdot)$ is the big-O notation standing for the order of arithmetic operations.

\begin{figure}[!t]
 \centering
\includegraphics[width=0.85\textwidth]{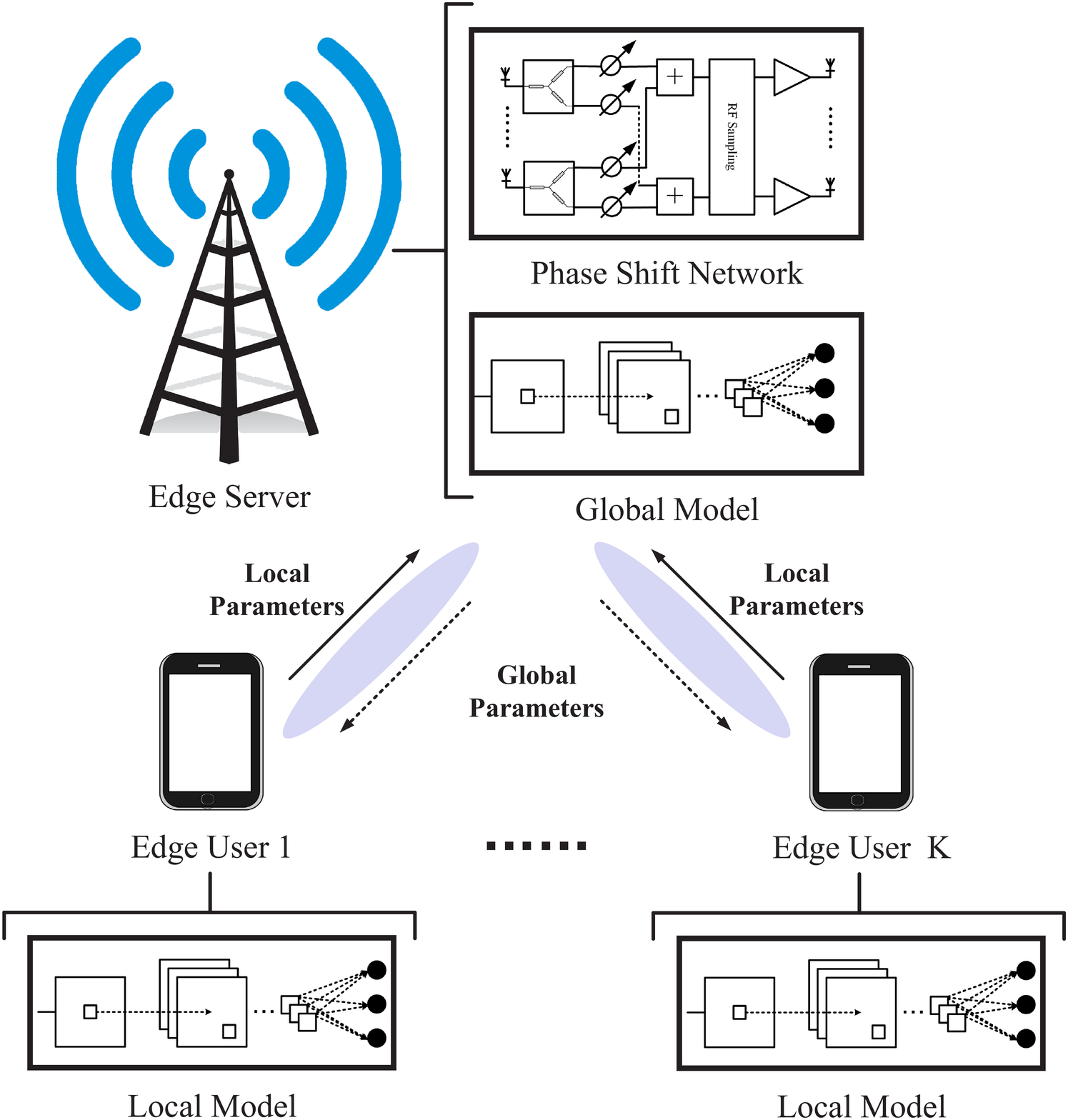}
  \caption{The UMWFL system with a phase shifting edge server.}
\end{figure}

\section{Unit Modulus Wireless Federated Learning}

\setcounter{secnumdepth}{4}
We consider a UMWFL system shown in Fig.~1, which consists of an edge server equipped with $N$ antennas and $K$ single-antenna mobile users. The dataset and model parameter vector at user $k$ are denoted as $\mathcal{D}_k$ and $\mathbf{x}_k\in\mathbb{R}^{M\times 1}$, respectively.
Mathematically, the UMWFL procedure aims to solve the following optimization problem:
\begin{align}\label{FL}
\mathop{\mathrm{min}}_{\substack{\{\mathbf{x}_k\},\bm{\theta}}}
\quad&\underbrace{\frac{1}{\sum_{k=1}^{K}|\mathcal{D}_k|}\sum_{k=1}^K\sum_{\mathbf{d}_{k,l}\in\mathcal{D}_k}\Theta(\mathbf{d}_{k,l}, \bm{\theta})}_{:=\Lambda(\bm{\theta})}
\nonumber\\
\mathrm{s.t.}\quad&\mathbf{x}_1=\cdots=\mathbf{x}_K= \bm{\theta},
\end{align}
where $\Theta(\mathbf{d}_{k,l}, \bm{\theta})$ is the loss function corresponding to a single sample $\mathbf{d}_{k,l}$ ($1\leq l \leq |\mathcal{D}_k|$) in $\mathcal{D}_k$ given parameter vector $\bm{\theta}$, while $\Lambda(\bm{\theta})$ denotes the global loss function to be minimized.
The training of UMWFL model parameters (i.e., solving \eqref{FL}) in the considered edge system is naturally a distributed and iterative procedure, where each iteration involves four steps:
1) updating the local parameter vectors $(\mathbf{x}_1,\cdots,\mathbf{x}_K)$ using $\{\mathcal{D}_1,\cdots,\mathcal{D}_K\}$ at users $(1,\cdots,K)$, respectively;
2) transforming the local parameters $(\mathbf{x}_1,\cdots,\mathbf{x}_K)$ into transmit symbols $(\mathbf{s}_1,\cdots,\mathbf{s}_K)$ via power and phase designs;
3) aggregating $(\mathbf{x}_1,\cdots,\mathbf{x}_k)$ in an analog manner at the edge server;
and 4) broadcasting the results to users.
The details for the $i$-th iteration of UMWFL are elaborated below.

\subsubsection{Local Updating}
In the first step, let $\mathbf{x}^{[i]}_k(0)\in\mathbb{R}^{M\times 1}$ be the local parameter vector at user $k$ at the beginning of the $i$-th iteration.
To update $\mathbf{x}^{[i]}_k(0)$, user $k$ minimizes the loss function $\frac{1}{|\mathcal{D}_k|}\sum_{\mathbf{d}_{k,l}\in\mathcal{D}_k}\Theta(\mathbf{d}_{k,l}, \mathbf{x}_k)$ via the gradient descent method as
\begin{align}\label{sgd}
&\mathbf{x}^{[i]}_k(\tau+1)=\mathbf{x}^{[i]}_k(\tau)-\frac{\varepsilon}{|\mathcal{D}_k|}\sum_{\mathbf{d}_{k,l}\in\mathcal{D}_k}\nabla_{\mathbf{x}}\Theta\left(\mathbf{d}_{k,l},\mathbf{x}^{[i]}_k(\tau)\right),
\end{align}
where $\varepsilon>0$ is the step-size and $\tau$ is from $0$ to $E-1$ with $E$ being the number of local updates.
Then, $\{\mathbf{x}^{[i]}_k(E)|\forall k\}$ from all users are uploaded to the edge server.

\subsubsection{Parameter Uploading}
In the second step, user $k$ encodes its local parameter vector $\mathbf{x}^{[i]}_k(E)$ into a complex vector $\mathbf{s}^{[i]}_k$.
Since the model parameters in deep learning are real-valued numbers, in order to reduce the transmission time, every two model parameters are transmitted as a complex number.  That is,
\begin{align}
\mathbf{s}_k^{[i]}=&
\frac{\overbrace{\sqrt{p^{[i]}_k}\,\mathrm{e}^{\rm{j}\phi_k^{[i]}}}^{:=t_k^{[i]}}}{\sqrt{2\eta^{[i]}}}\,\Big[x_{k,1}^{[i]}(E)+\mathrm{j}\,x_{k,2}^{[i]}(E),\cdots,x_{k,M-1}^{[i]}(E)+\mathrm{j}\,x_{k,M}^{[i]}(E)\Big]^T, \label{sk}
\end{align}
where $p_k^{[i]}\in\mathbb{R}$ and $\phi_k^{[i]}$ are the transmit power and phase at user $k$, $\mathbf{s}_k^{[i]}\in\mathbb{C}^{S\times 1}$ with $S=M/2$, $x_{k,m}^{[i]}(E)$ is the $m$-th element of $\mathbf{x}_k^{[i]}(E)$.
The scaling factor $\eta^{[i]}$ is $\eta^{[i]}=\frac{1}{K}\sum_{k=1}^K\overline{\eta}_k^{[i]}$ with $\overline{\eta}_k^{[i]}=\frac{1}{M}\|\mathbf{x}_k^{[i]}(E)\|_2^2$ such that the average power of $\mathbf{s}_k^{[i]}$ is $\frac{1}{S}\mathbb{E}[\|\mathbf{s}_k^{[i]}\|_2^2]=p_k^{[i]}$.
To facilitate the subsequent derivations, we define the transmit coefficient $\{t_k^{[i]}\}$ in \eqref{sk} and $\{p_k^{[i]},\varphi_k^{[i]}\}$ can be recovered from $\{t_k^{[i]}\}$.

\subsubsection{Parameter Aggregation}
In the third step, the received signal $\mathbf{R}^{[i]}\in\mathbb{C}^{N\times S}$ at the server is
\begin{align}\label{Ri}
&\mathbf{R}^{[i]}=\sum_{k=1}^K\mathbf{h}_k^{[i]}(\mathbf{s}_{k}^{[i]})^T+\mathbf{Z}^{[i]},
\end{align}
where $\mathbf{h}_k^{[i]}\in\mathbb{C}^{N\times 1}$ is the uplink channel vector from user $k$ to the server and $\mathbf{Z}^{[i]}\in\mathbb{C}^{N\times S}$ is the matrix of the additive white Gaussian noise with covariance matrix
$\mathbb{E}\left[\mathrm{vec}(\mathbf{Z}^{[i]})\mathrm{vec}(\mathbf{Z}^{[i]})^H\right]=\sigma^2_b\mathbf{I}_{NS}$, where $\sigma^2_b$ is the noise power at the server.
Upon receiving the superimposed signal, the server processes $\mathbf{R}^{[i]}$ using a phase shift network $\mathbf{F}^{[i]}\in\mathbb{C}^{N\times N}$.
The phase shifted signals are then fed to the RF sampling module \cite{rfsampling} for data caching.
As shown in Fig.~1, the matrix $\mathbf{F}^{[i]}$ requires unit-modulus constraints on all of its elements \cite{analog,analog2}, i.e., $\mathbf{F}^{[i]}\in\mathcal{F}$ where
\begin{align}
&\mathcal{F}=\{\mathbf{F}:|F_{l,l'}^{[i]}|=1,\quad\forall l,l'\}. \label{setF}
\end{align}

\subsubsection{Parameter Decoding}
Finally, the server broadcasts $\sqrt{\gamma}\,\mathbf{F}^{[i]}\mathbf{R}^{[i]}\in\mathbb{C}^{N\times S}$ to all the users, where $\gamma>0$ is the power scaling factor adopted at the edge server.
The received signal at user $k$ is
\begin{align}
&\left(\mathbf{y}^{[i]}_k\right)^T=\left(\mathbf{g}_k^{[i]}\right)^H
\sqrt{\gamma}\,\mathbf{F}^{[i]}\mathbf{R}^{[i]}+\left(\mathbf{n}^{[i]}_k\right)^T, \label{yk}
\end{align}
where $\mathbf{g}_k^{[i]}\in\mathbb{C}^{N\times 1}$ is the downlink channel vector from the server to user $k$ and $\mathbf{n}_k^{[i]}\in\mathbb{C}^{S\times 1}$ is the vector of the additive white Gaussian noise with covariance matrix $\sigma^2_k\mathbf{I}_S$, where $\sigma^2_k$ is the noise power at user $k$.
User $k$ applies a normalization coefficient $r^{[i]}_k\in\mathbb{C}$ to $\mathbf{y}^{[i]}_k$ as
\begin{align}
\mathbf{x}^{[i+1]}_k(0)
=&
\sqrt{2\eta^{[i]}}\,\Big[\mathrm{Re}(r_k^{[i]}y^{[i]}_{k,1}),\mathrm{Im}(r^{[i]}_ky^{[i]}_{k,1}),\cdots,\mathrm{Re}(r^{[i]}_ky^{[i]}_{k,L}),\mathrm{Im}(r^{[i]}_ky^{[i]}_{k,L})\Big]^T, \label{nextround}
\end{align}
where $y^{[i]}_{k,l}$ is the $l$-th element of $\mathbf{y}^{[i]}_k$.
It can be seen that $\mathbf{x}^{[i+1]}_k(0)$ is the starting point for the $(i+1)$-th iteration.
This completes one UMWFL round.

\section{Training Loss Minimization}

Ideally, the optimization of $\{\mathbf{F}^{[i]},r_k^{[i]},t_k^{[i]}\}$ should be performed to minimize the training loss, i.e., $\min~\Lambda(\mathbf{x}_k^{[i+1]}(0))$.
However, the analytical expression of $\mathbb{E}\left[\Lambda(\mathbf{x}_k^{[i+1]}(0))\right]$, where the expectation is taken over receiver noises and model parameters, is usually challenging to derive.
As a compromise approach, we resort the use of the upper bound of the expectation of $\mathbb{E}\left[\Lambda(\mathbf{x}_k^{[i+1]}(0))-\Lambda(\bm{\theta}^*)\right]$ as a metric to capture the degradation on training loss.
The upper bound depends on the MSE of the model parameters' estimation $\mathbb{E}\left[\|\mathbf{x}^{[i+1]}_k(0)-\bm{\theta}^{[i]}\|_2^2\right]$ where
\begin{align}
&\bm{\theta}^{[i]}=\sum_{k=1}^K\frac{|\mathcal{D}_k|}{\sum_{l=1}^{K}|\mathcal{D}_l|}\mathbf{x}_k^{[i]}(E) \label{global}
\end{align}
is equivalent to the gradient descent of the objective function of \eqref{FL}.
Specifically, based on \eqref{nextround} and \eqref{global}, the MSE between the received local parameter $\mathbf{x}^{[i+1]}_k(0)$ and the target local parameter $\bm{\theta}^{[i]}$ at the $i$-th UMWFL iteration is
\begin{align}\label{mse}
&\mathbb{MSE}_k^{[i]}\left(\mathbf{F}^{[i]}, r_k^{[i]},\{t_k^{[i]}\} \right)
=\mathbb{E}\left[\Big\|\mathbf{x}^{[i+1]}_k(0)-\bm{\theta}^{[i]}\Big\|_2^2\right]
\nonumber\\
&=2\eta^{[i]} S\Bigg[\gamma\sum_{j=1}^K\Big|r_k^{[i]}(\mathbf{g}_k^{[i]})^H\mathbf{F}^{[i]}\mathbf{h}_j^{[i]}\sqrt{t_j^{[i]}}-\alpha_j\Big|^2
+\gamma\sigma^2_b\|r_k^{[i]}(\mathbf{g}_k^{[i]})^H\mathbf{F}\|_2^2+\sigma_k^2|r_k^{[i]}|^2\Bigg],
\end{align}
where $\alpha_k=\frac{|\mathcal{D}_k|}{\sum_{l=1}^{K}|\mathcal{D}_l|}$ and the equality is due to \eqref{sk}, \eqref{Ri}, \eqref{yk}, \eqref{nextround}, \eqref{global}, and the independence among $\{\mathbf{s}_k|\forall k\}$.
Having obtained the MSE expression, the next step is to derive the relationship between the loss bound and the MSE.
To this end, we first introduce the following assumption imposed on the loss function.
\begin{assumption}
(i) The function $\Lambda(\mathbf{x})$ is $\mu$-strongly convex.
(ii) The function $\sum_{\mathbf{d}_{k,l}\in\mathcal{D}_k}\Theta(\mathbf{d}_{k,l}, \mathbf{x})$ is twice differentiable and satisfies $\sum_{\mathbf{d}_{k,l}\in\mathcal{D}_k}\nabla^2_{\mathbf{x}}\Theta(\mathbf{d}_{k,l}, \mathbf{x})\preceq L\mathbf{I}$.
\end{assumption}

Under Assumption 1, the relationship between $\Lambda(\mathbf{x}_k^{[i+1]}(0))$ and $\Lambda(\bm{\theta}^{*})$ is summarized in the following theorem.
\begin{theorem}
With $(\varepsilon,E)=(\frac{\sum_{k=1}^{K}|\mathcal{D}_k|}{KL}, 1)$, the UMWFL scheme satisfies
\begin{align}
&\mathbb{E}\left[\Lambda(\mathbf{x}_k^{[i+1]}(0))-\Lambda(\bm{\theta}^*)\right]
\leq
\sum_{i'=0}^{i}A^{[i']}\,\mathop{\mathrm{max}}_{k=1,\cdots,K}\mathbb{MSE}^{[i']}_k,
\end{align}
for any $\{\mathbf{F}^{[i']},r_k^{[i']},t_k^{[i']}\}_{i'=0}^i$ as $i\rightarrow+\infty$, where
\begin{align}
&A^{[i']}=
\frac{KL\left(3+2K^{-1}\right)}{2\sum_{k=1}^{K}|\mathcal{D}_k|}
\left(1-\frac{\mu\sum_{k=1}^{K}|\mathcal{D}_k|}{KL}\right)^{i-i'}.
\end{align}
\end{theorem}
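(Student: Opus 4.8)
The plan is to read the statement as a convergence guarantee for a \emph{perturbed} gradient descent on the global loss $\Lambda$, in which the per-iteration perturbation is controlled by the reception error quantified by $\mathbb{MSE}_k^{[i]}$ in \eqref{mse}. I first fix the curvature constants of $\Lambda$: Assumption~1(i) supplies $\mu$-strong convexity, while summing the Hessian bound in Assumption~1(ii) over all $k$ and dividing by $\sum_{k}|\mathcal{D}_k|$ gives $\nabla^2\Lambda\preceq\widetilde{L}\mathbf{I}$ with $\widetilde{L}=\frac{KL}{\sum_{k}|\mathcal{D}_k|}$. The prescribed step size is then exactly $\varepsilon=1/\widetilde{L}$, the canonical exact-step choice, and with $E=1$ each user performs a single descent step $\mathbf{x}_k^{[i]}(1)=\mathbf{x}_k^{[i]}(0)-\frac{\varepsilon}{|\mathcal{D}_k|}\nabla f_k(\mathbf{x}_k^{[i]}(0))$, where $f_k:=\sum_{\mathbf{d}_{k,l}\in\mathcal{D}_k}\Theta(\mathbf{d}_{k,l},\cdot)$.

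Next I would reduce the scheme to a perturbed descent recursion for the virtual aggregate $\bm{\theta}^{[i]}$ in \eqref{global}. Writing the reception error $\mathbf{e}_k^{[i]}:=\mathbf{x}_k^{[i+1]}(0)-\bm{\theta}^{[i]}$ so that $\mathbb{E}\|\mathbf{e}_k^{[i]}\|_2^2=\mathbb{MSE}_k^{[i]}$, substituting $\mathbf{x}_k^{[i+1]}(0)=\bm{\theta}^{[i]}+\mathbf{e}_k^{[i]}$ into the single local step, and using the key identity $\frac{\alpha_k}{|\mathcal{D}_k|}=\frac{1}{\sum_{l}|\mathcal{D}_l|}$ together with $\nabla\Lambda=\frac{1}{\sum_{l}|\mathcal{D}_l|}\sum_{k}\nabla f_k$, a short calculation yields $\bm{\theta}^{[i+1]}=\bm{\theta}^{[i]}-\varepsilon\nabla\Lambda(\bm{\theta}^{[i]})+\mathbf{p}^{[i]}$, where $\mathbf{p}^{[i]}=\sum_{k}\alpha_k\mathbf{e}_k^{[i]}-\frac{\varepsilon}{\sum_{l}|\mathcal{D}_l|}\sum_{k}\big[\nabla f_k(\bm{\theta}^{[i]}+\mathbf{e}_k^{[i]})-\nabla f_k(\bm{\theta}^{[i]})\big]$. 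The first piece of $\mathbf{p}^{[i]}$ is the averaged error, while the gradient-mismatch piece is bounded by Assumption~1(ii) as $\|\nabla f_k(\bm{\theta}^{[i]}+\mathbf{e}_k^{[i]})-\nabla f_k(\bm{\theta}^{[i]})\|_2\le L\|\mathbf{e}_k^{[i]}\|_2$; under $\varepsilon=1/\widetilde{L}$ its prefactor collapses to $\frac{\varepsilon L}{\sum_{l}|\mathcal{D}_l|}=\frac{1}{K}$.

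I would then establish a one-step contraction. The exact-step descent lemma for an $\widetilde{L}$-smooth, $\mu$-strongly convex function supplies the factor $\rho:=1-\mu/\widetilde{L}=1-\frac{\mu\sum_{k}|\mathcal{D}_k|}{KL}$, which is precisely the base of $A^{[i']}$. Passing $\mathbf{p}^{[i]}$ through the descent lemma and using the smoothness inequality $\Lambda(\bm{\theta}^{[i]}+\mathbf{e}_k^{[i]})\le\Lambda(\bm{\theta}^{[i]})+\langle\nabla\Lambda(\bm{\theta}^{[i]}),\mathbf{e}_k^{[i]}\rangle+\frac{\widetilde{L}}{2}\|\mathbf{e}_k^{[i]}\|_2^2$ to relate the user-specific target $\Lambda(\mathbf{x}_k^{[i+1]}(0))$ to $\Lambda(\bm{\theta}^{[i]})$, I absorb the inner-product cross terms via Young's inequality into the contraction and bound all squared errors through $\sum_{k}\alpha_k\|\mathbf{e}_k^{[i]}\|_2^2\le\max_k\mathbb{MSE}_k^{[i]}$ and $\frac1K\sum_k\|\mathbf{e}_k^{[i]}\|_2^2\le\max_k\mathbb{MSE}_k^{[i]}$ (the $\max_k$ is what makes the bound uniform over users). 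This produces a recursion of the form $\mathbb{E}[\Lambda(\mathbf{x}_k^{[i+1]}(0))-\Lambda(\bm{\theta}^*)]\le\rho\,\mathbb{E}[\Lambda(\bm{\theta}^{[i-1]})-\Lambda(\bm{\theta}^*)]+C\max_k\mathbb{MSE}_k^{[i]}$.

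Finally I would unroll the recursion from $i'=0$ to $i$, producing the geometric-weighted sum $\sum_{i'=0}^{i}\rho^{\,i-i'}C\max_k\mathbb{MSE}_k^{[i']}$ together with a transient proportional to $\rho^{\,i+1}[\Lambda(\bm{\theta}^{[0]})-\Lambda(\bm{\theta}^*)]$; since $0<\rho<1$ (guaranteed because $0<\mu\le\widetilde{L}$), the transient vanishes as $i\to+\infty$, leaving exactly $\sum_{i'=0}^{i}A^{[i']}\max_k\mathbb{MSE}_k^{[i']}$ with $A^{[i']}=\rho^{\,i-i'}C$. I expect the main obstacle to be the constant bookkeeping in the contraction step: the coefficient must collapse to precisely $C=\frac{KL(3+2K^{-1})}{2\sum_{k}|\mathcal{D}_k|}=\frac{\widetilde{L}}{2}\big(3+2K^{-1}\big)$, where the $\frac{\widetilde{L}}{2}$ is the standard smoothness prefactor, the $2K^{-1}$ traces back to the gradient-mismatch term whose coefficient is $1/K$ under the prescribed step, and the $3$ must be assembled from the averaged-error and final-smoothness contributions after the Young's-inequality constants are chosen to dominate every cross term — so it is the careful tracking of these constants, rather than any single inequality, that is delicate.
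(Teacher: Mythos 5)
Your proposal follows essentially the same route as the paper's proof, which is only sketched in three steps (a bound between consecutive parameter vectors, then between consecutive losses, then unrolling the sequence of loss values, with details deferred to the cited arXiv reference): your perturbed-descent recursion for $\bm{\theta}^{[i]}$, one-step contraction with factor $1-\mu\sum_k|\mathcal{D}_k|/(KL)$, and geometric unrolling with a vanishing transient map exactly onto those three steps. The structure and the provenance of the constants (the $\widetilde{L}/2$ smoothness prefactor, the $1/K$ gradient-mismatch coefficient, the vanishing $\rho^{i+1}$ transient explaining the ``as $i\to+\infty$'' qualifier) are all consistent with the stated $A^{[i']}$, so this is a correct reconstruction of the paper's argument rather than a different one.
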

\begin{proof}
We first derive the upper bound between consecutive parameter vectors.
Then the upper bound between consecutive losses can be derived.
Finally, we derive the upper bound of the sequence of loss function values.
Please refer to \cite{umair} for detailed derivations.
\end{proof}

Theorem 1 shows a diminishing $A^{[i']}\rightarrow 0$ for a large $i-i'$, meaning that the impact from earlier UMWFL iterations vanishes as the UMWFL continues.
On the other hand, if $\mathbb{MSE}^{[i']}_k\rightarrow 0$ for all $k$, then $\Lambda(\mathbf{x}_k^{[i+1]})$ is an unbiased estimate of $\Lambda(\bm{\theta}^*)$.
This demonstrates the effectiveness of UMWFL in the asymptotic regime.
The convexity and smoothness in Assumption 1 have been adopted in most loss bound analysis of FL (e.g., \cite{fed3}).
Although it seems to be restrictive for some realistic applications, analysis under Assumption 1 could provide important insights of the behavior of UMWFL in nonconvex cases.

Based on Theorem 1, the training loss minimization in UMWFL systems is formulated as
\begin{align}
\mathop{\mathrm{min}}_{\substack{\{\mathbf{F}^{[i']}\in\mathcal{F},r_k^{[i']},|t_k^{[i]}|^2\leq P_0\}}}
\quad&
\sum_{i'=0}^{i}A^{[i']}\,\mathop{\mathrm{max}}_{k=1,\cdots,K}\mathbb{MSE}^{[i']}_k.
\end{align}
It can be seen that the above problem and constraints can be decoupled for each iteration and the minimization at the $i$-th UMWFL iteration, $\forall i$, is given by
\begin{subequations}
\begin{align}
\mathcal{P}:\mathop{\mathrm{min}}_{\substack{\mathbf{F},\,\{r_k,t_k\}}}
\quad&
\mathop{\mathrm{max}}_{k=1,\cdots,K}~\gamma\sum_{j=1}^K\Big|r_k\mathbf{g}_k^H\mathbf{F}\mathbf{h}_jt_j-\alpha_j\Big|^2
+\gamma\sigma^2_b\|r_k(\mathbf{g}_k)^H\mathbf{F}\|_2^2+\sigma_k^2|r_k|^2
\label{P0}
\\
\quad\quad\quad\mathrm{s. t.}\quad\quad
&\mathbf{F}\in\mathcal{F},\quad |t_k|^2\leq P_0, \quad \forall k,
\end{align}
\end{subequations}
where the UMWFL iteration index $i'$ is removed since there is no dependence among different UMWFL iterations.

Problem $\mathcal{P}$ is generally NP-hard due to the unit-modulus constraints \cite{analog,analog2}.
In addition, the coupling between variables $\{r_k,t_k\}$ and $\mathbf{F}$ introduces nonlinearity and nonconvexity to the considered problem.
Finally, the large dimension of $\mathbf{F}$ call for the design of low-complexity algorithms in the scenario with massive numbers of antennas.

\section{Penalty Alternating Minimization for UMWFL}

In this section, the PAM algorithm, which consists of two nested layers of iterations (i.e., an outer-layer iteration and an inner-layer iteration), will be proposed to optimize the system performance.
Below we first introduce the outer-layer iteration.

\subsection{Outer-Layer Iteration Via Alternating Optimization}

To resolve the coupling between variables $\{r_k,t_k\}$ and $\mathbf{F}$, this paper adopts an alternating optimization framework \cite{beck}, which optimizes one design variable at a time with others being fixed.
Starting with an initial solution $\{\mathbf{F}^{(0)},r_k^{(0)},t_k^{(0)}\}$, the entire procedure
solving problem $\mathcal{P}$ for the $(n+1)$-th outer iteration, $\forall n$, can be elaborated below:
\begin{subequations}
\begin{align}
\mathbf{F}^{(n+1)}=\mathop{\mathrm{arg~min}}_{\substack{\mathbf{F}}}
&
\mathop{\mathrm{max}}_{k}~\Bigg(\sum_{j=1}^K\Big|r_k^{(n)}\mathbf{g}_k^H\mathbf{F}\mathbf{h}_jt_j^{(n)}-\alpha_j\Big|^2
+\sigma^2_b\|r_k^{(n)}\mathbf{g}_k^H\mathbf{F}\|_2^2\Bigg) \nonumber\\
\quad\quad\mathrm{s. t.}\quad
&|F_{l,l'}|=1,\quad \forall l,l', \label{problemF}
\\
\{r_k^{(n+1)}\}=\mathop{\mathrm{arg~min}}_{\substack{\{r_k\}}}
&
\mathop{\mathrm{max}}_{k}
\Bigg(
\gamma\sigma^2_b\|r_k\mathbf{g}_k^H\mathbf{F}^{(n+1)}\|_2^2+\sigma_k^2|r_k|^2
+
\gamma\sum_{j=1}^K\Big|r_k\mathbf{g}_k^H\mathbf{F}^{(n+1)}\mathbf{h}_jt_j^{(n)}-\alpha_j\Big|^2
\Bigg), \label{problemq}
\\
\{t_k^{(n+1)}\}=\mathop{\mathrm{arg~min}}_{\substack{\{t_k\}}}
&
\mathop{\mathrm{max}}_{k}~\sum_{j=1}^K\Big|r_k^{(n+1)}\mathbf{g}_k^H\mathbf{F}^{(n+1)}\mathbf{h}_jt_j-\alpha_j\Big|^2 \nonumber\\
\quad\quad\mathrm{s. t.}\quad
&|t_k|^2\leq P_0,\quad k=1,\cdots,K, \label{problemp}
\end{align}
\end{subequations}
where $\{\mathbf{F}^{(n)},r_k^{(n)},t_k^{(n)}\}$ is the solution at the $n$-th outer iteration.
The iterative procedure stops until $n$ reaches the maximum iteration number $n=N_{\mathrm{max}}$.

It can be seen that problem \eqref{problemq} is a typical least squares problem.
The optimal solution is found by setting the derivative $\partial \mathbb{MSE}_k/\partial \mathrm{conj}(r_k)$ to zero, which yields
\begin{align}
r_k^{(n+1)}=&
\frac{\sum_{j=1}^K\alpha_j\mathrm{conj}\left(\mathbf{g}_k^H\mathbf{F}^{(n+1)}\mathbf{h}_jt_j^{(n)}\right)}
{\sum_{j=1}^K|t_j^{(n)}\mathbf{g}_k^H\mathbf{F}^{(n+1)}\mathbf{h}_j|^2
+\sigma^2_b\|\mathbf{g}_k^H\mathbf{F}^{(n+1)}\|_2^2
+\frac{\sigma_k^2}{\gamma}}. \label{qk}
\end{align}
On the other hand, problem \eqref{problemp} is a convex optimization problem, which can be solved with a complexity of $\mathcal{O}(K^{3.5})$ via CVX, a Matlab software package for solving convex problems based on interior point method (IPM).
Consequently, the key challenge is to tackle the nonconvex problem \eqref{problemF}.

While problem \eqref{problemF} can be transformed into a convex problem via semidefinite relaxation (SDR), the computational complexity for solving the SDR problem is at least $\mathcal{O}\left(N^{7}\right)$ \cite{opt1}.
Since $N$ is usually much larger than $K$, this method is not desirable.
In the following, a new algorithm termed PAM, which decomposes \eqref{problemF} into smaller subproblems that are solved by closed-form updates, is proposed for achieving both excellent performance and significantly lower computational complexities.

\subsection{Inner-Layer Iteration Via PAM}

Since $\mathbf{F}$ is a matrix, its vectorization is given as $\mathbf{f}=\mathrm{vec}\left(\mathbf{F}\right)\in\mathbb{C}^{N^2\times 1}$.
Applying $\mathrm{Tr}\left(\mathbf{A}\mathbf{X}\mathbf{B}\mathbf{X}^T\right)=\mathrm{vec}(\mathbf{X})^T\left(\mathbf{B}^T\otimes\mathbf{A}\right)\mathrm{vec}(\mathbf{X})$ \cite{vec},
we have
\begin{align}
r_k^{(n)}\mathbf{g}_k^H\mathbf{F}\mathbf{h}_jt_j^{(n)}&
=(\mathbf{a}_{k,j}^{(n)})^H\mathbf{f},
\\
\sigma^2_b\|r_k^{(n)}\mathbf{g}_k^H\mathbf{F}\|_2^2&
=\mathbf{f}^H\mathbf{G}_k^{(n)}\mathbf{f},
\end{align}
where
\begin{align}
\mathbf{G}_k^{(n)}&=\sigma^2_b|r_k^{(n)}|^2\mathbf{I}_N\otimes\left(\mathbf{g}_k\mathbf{g}_k^H\right),
\\
\mathbf{a}_{k,j}^{(n)}&=\left[r_k^{(n)}t_j^{(n)}\left(\mathbf{h}_j^T\otimes\mathbf{g}_k^H\right)\right]^H.
\end{align}

Problem \eqref{problemF} is thus re-formulated as
\begin{align}
\mathcal{P}_F:\mathop{\mathrm{min}}_{\substack{\mathbf{f}}}
\quad&\mathop{\mathrm{max}}_{k=1,\cdots,K}~\left(
\sum_{j=1}^K\Big|(\mathbf{a}_{k,j}^{(n)})^H\mathbf{f}-\alpha_j\Big|^2+
\mathbf{f}^H\mathbf{G}_k^{(n)}\mathbf{f}\right)\nonumber\\
\quad\quad\quad\mathrm{s. t.}\quad
&|f_{l}|=1,\quad l=1,\cdots,N^2. \label{PF}
\end{align}
To handle the nonseparable objective function, variable splitting of $\mathbf{f}$ is proposed such that $\mathbf{f}=\mathbf{u}_1=\cdots=\mathbf{u}_K$, where $\{\mathbf{u}_k\}$ are auxilliary variables.
Moreover, to handle the unit-modulus constraints, another auxilliary variable $\mathbf{z}=\mathbf{f}$ is introduced.
For the newly introduced equality constraints, they can be transformed into quadratic penalties in the objective function \cite{penalty}.
Thus, $\mathcal{P}_F$ is approximately transformed into
\begin{align}
\mathop{\mathrm{min}}_{\substack{\mathbf{f},\mathbf{z},\{\mathbf{u}_{k}\}}}
\quad&\mathop{\mathrm{max}}_{k=1,\cdots,K}~\left(
\sum_{j=1}^K\Big|(\mathbf{a}_{k,j}^{(n)})^H\mathbf{u}_{k}-\alpha_j\Big|^2+
\mathbf{u}_{k}^H\mathbf{G}_k^{(n)}\mathbf{u}_{k}\right)
+\rho\left(\frac{1}{K}\sum_{j=1}^K\|\mathbf{u}_{j}-\mathbf{f}\|_2^2+\|\mathbf{z}-\mathbf{f}\|_2^2\right)
\nonumber\\
\mathrm{s. t.}\quad
&|z_{l}|=1,\quad l=1,\cdots,N^2, \label{PFa}
\end{align}
where $\rho$ is a penalty parameter for fine-tuning.
It can be proved that $\mathcal{P}_F$ and \eqref{PFa} are equivalent problems as $\rho\rightarrow+\infty$ \cite{penalty}.
However, this case also leads to the gradient norm of the objective function of \eqref{PFa} being infinite, making \eqref{PFa} difficult to solve.
Therefore, $\rho$ controls the tradeoff between approximation error and difficulty in solving \eqref{PFa}.

We address \eqref{PFa} using alternating minimization, in which the cost function is iteratively minimized with respect to one variable whereas the others are fixed.
Starting with an initial $\mathbf{f}^{(0)}=\mathbf{z}^{(0)}=\mathbf{u}_{k}^{(0)}=\mathrm{vec}\left(\mathbf{F}^{(n)}\right)$, the whole process consists of iteratively solving
\begin{subequations}
\begin{align}
\mathbf{u}_{k}^{(m+1)}=\mathop{\mathrm{arg~min}}_{\substack{\mathbf{u}_{k}}}
\quad&\sum_{j=1}^K\Big|(\mathbf{a}_{k,j}^{(n)})^H\mathbf{u}_{k}-\alpha_j\Big|^2+
\mathbf{u}_{k}^H\mathbf{G}_k^{(n)}\mathbf{u}_{k}
+\frac{\rho}{K}\|\mathbf{u}_{k}-\mathbf{f}^{(m)}\|_2^2,\quad \forall k, \label{PFa1}
\\
\mathbf{f}^{(m+1)}=\mathop{\mathrm{arg~min}}_{\substack{\mathbf{f}}}
\quad&\rho\Bigg(\frac{1}{K}\sum_{j=1}^K\|\mathbf{u}_{j}^{(m+1)}-\mathbf{f}\|_2^2
+\|\mathbf{z}^{(m)}-\mathbf{f}\|_2^2\Bigg), \label{PFa2}
\\
\mathbf{z}^{(m+1)}=\mathop{\mathrm{arg~min}}_{\substack{|z_{l}|=1, \forall l}}
\quad&\rho\|\mathbf{z}-\mathbf{f}^{(m+1)}\|_2^2, \label{PFa3}
\end{align}
\end{subequations}
where $m$ is the inner iteration index.
It can be verified that the objective function of \eqref{PFa} is strongly convex.
Therefore, despite the non-differentiability of the objective, the alternating minimization \eqref{PFa1}--\eqref{PFa3} is guaranteed to converge to a stationary point of \eqref{PFa} \cite{beck}.
The iterative procedure stops until $m$ reaches the maximum iteration number $m=M_{\mathrm{max}}$.

The remaining question is how to solve \eqref{PFa1}--\eqref{PFa3} optimally.
We notice that problems \eqref{PFa1} and \eqref{PFa2} are standard least squares problems, thus their solutions are given by the following closed-form expressions
\begin{align}
\mathbf{u}_{k}^{(m+1)}&=\left(
\sum_{j=1}^K\mathbf{a}_{k,j}^{(n)}(\mathbf{a}_{k,j}^{(n)})^H+\mathbf{G}_k^{(n)}+\rho\mathbf{I}
\right)^{-1}\left(\sum_{j=1}^K\alpha_j\mathbf{a}_{k,j}^{(n)}+\frac{\rho}{K}\mathbf{f}^{(m)}\right), \label{ukm}
\\
\mathbf{f}^{(m+1)}&=\frac{1}{2}\left(\frac{1}{K}\sum_{j=1}^K\mathbf{u}_{j}^{(m+1)}+\mathbf{z}^{(m)}\right), \label{fm}
\end{align}
respectively.
On the other hand, problem \eqref{PFa3} is the projection of $\mathbf{f}^{(m+1)}$ onto unit-modulus constraint and the optimal solution is simply
\begin{align}
\mathbf{z}^{(m+1)}=\mathrm{exp}\left(\mathrm{j}\angle\mathbf{f}^{(m+1)}\right). \label{zm}
\end{align}

\begin{figure*}[!t]
\centering
\subfigure[]{\includegraphics[width=0.99\textwidth]{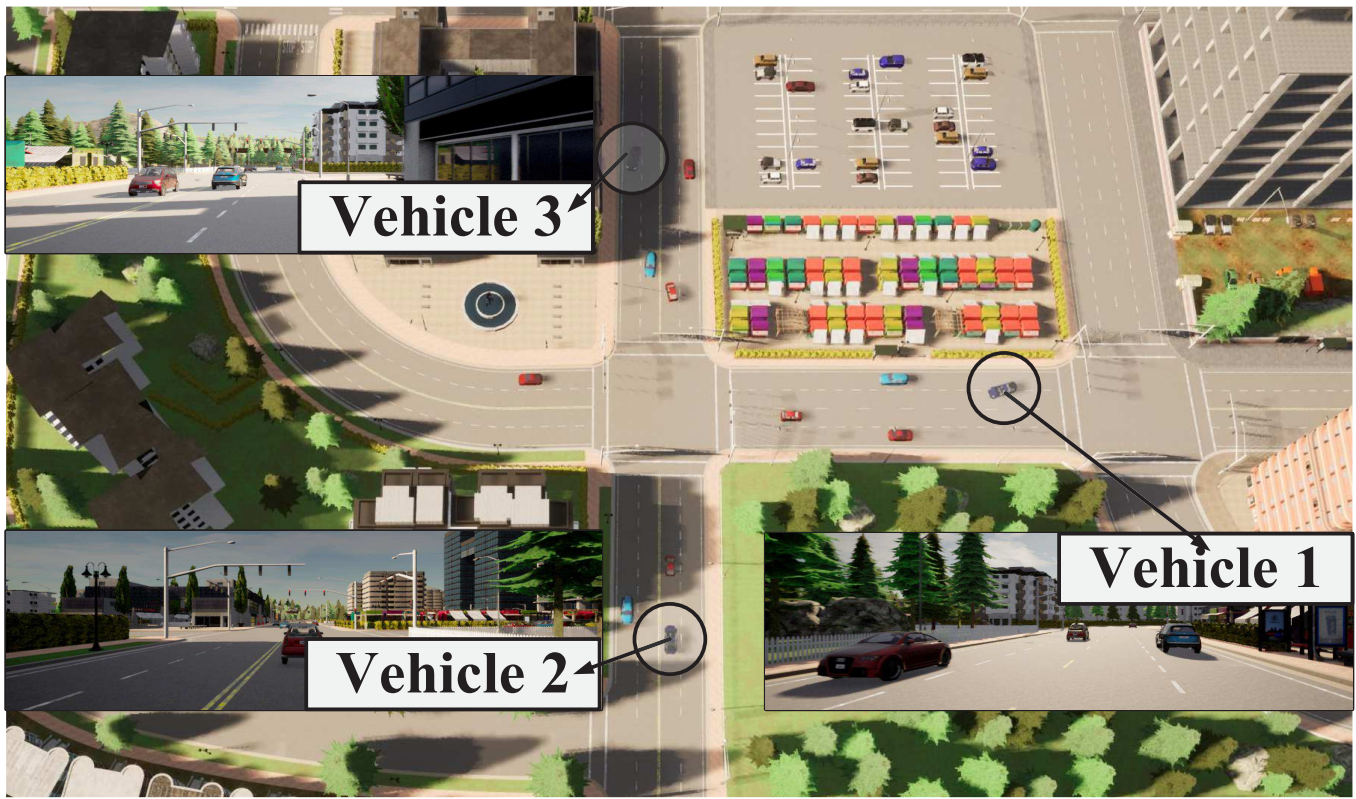}}
\subfigure[]{\includegraphics[height=0.42\textwidth]{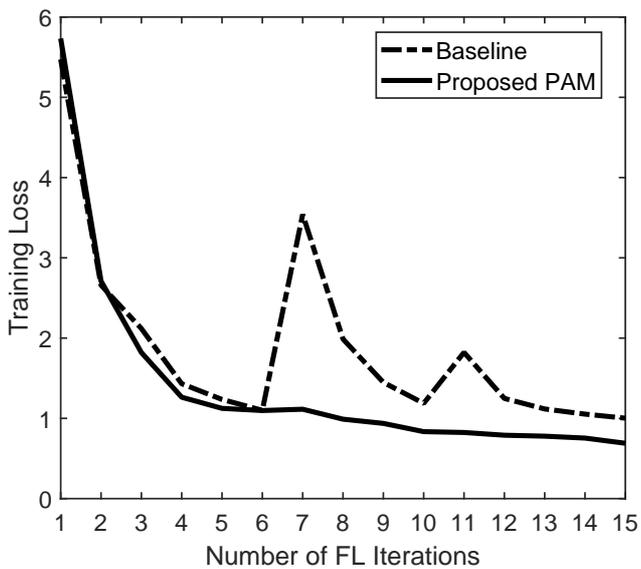}}
\subfigure[]{\includegraphics[height=0.42\textwidth]{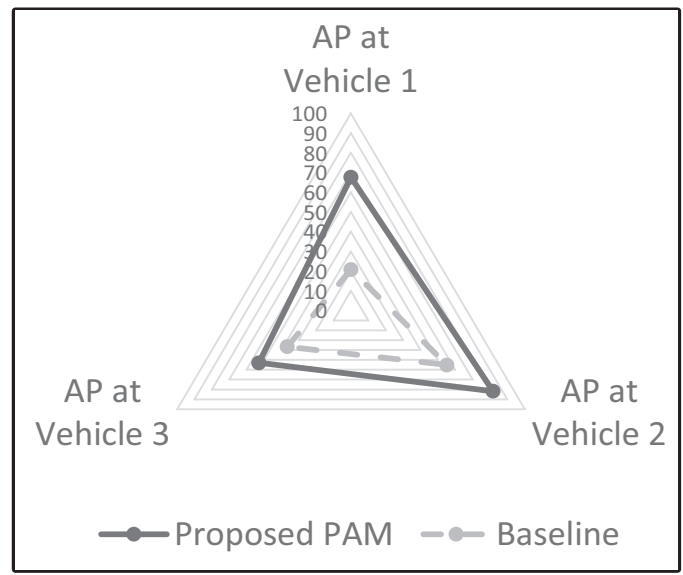}}
  \caption{Comparison between the proposed and benchmark schemes when $N=8$ with $K=3$: a) Multi-vehicle heterogeneous datasets in ``Town 05'' map; b) Training loss versus the number of UMWFL iterations; c) Average precision at $\textrm{IoU}=0.5$ on the testing datasets at different vehicles. The axis of each vehicle scales from average precision (AP) $0\%$ to $100\%$ with a step size of $10\%$.}
\end{figure*}

\subsection{Summary and Complexity Analysis of PAM}

In summary, the complete PAM algorithm for solving problem $\mathcal{P}$ is summarized in \textbf{Algorithm 1}.
The computational complexity (CP) is dominated by the steps at lines 4--6 in Algorithm 1, of which the CPs are given by $\mathcal{O}(KN^2)$, $\mathcal{O}(KN^2)$, and $\mathcal{O}(N^2)$, respectively.
Therefore, the total CP of PAM is $\mathcal{O}(N_{\rm{max}}M_{\rm{max}}KN^2)$

\begin{algorithm}[!t]
    \caption{UMWFL with PAM}
        \begin{algorithmic}[1]
            \State  Initialize $\mathbf{F}^{(0)},\,\{r_k^{(0)},t_k^{(0)}\}$ for all $k$.
            \State \textbf{For} $n=0,\cdots,N_{\rm{max}}-1$:
            \State \ \ \textbf{For} $m=0,\cdots,M_{\rm{max}}-1$:
            \State \ \  \ \  Update $\{\mathbf{u}_k^{(m+1)}\}_{k=1}^K$ using \eqref{ukm} with CP $\mathcal{O}(KN^2)$.
            \State \ \   \ \ Update $\mathbf{f}^{(m+1)}$ using \eqref{fm} with CP $\mathcal{O}(KN^2)$.
            \State \ \   \ \ Update $\mathbf{z}^{(m+1)}$ using \eqref{zm} with CP $\mathcal{O}(N^2)$.
            \State \ \ \textbf{End~For}
            \State \ \ Update $\mathbf{F}^{(n+1)}$ as $\mathbf{F}^{(n+1)}=\mathrm{mat}(\mathbf{z}^{(M_{\rm{max}})})$.
            \State \ \ Update $\{r_k^{(n+1)}\}_{k=1}^K$ using \eqref{qk} with CP $\mathcal{O}(KN^2)$.
            \State \ \ Update $\{t_k^{(n+1)}\}_{k=1}^K$ using \eqref{problemp} with CP $\mathcal{O}(K^{3.5})$.
            \State \textbf{End~For}
            \State Output $\mathbf{F}^{\diamond}=\mathbf{F}^{(N_{\rm{max}})}$, $r_k^\diamond=r_k^{(N_{\rm{max}})}$, $t_k^\diamond=t_k^{(N_{\rm{max}})}$.
        \end{algorithmic}
\end{algorithm}

\section{Results and Discussions}

This section presents simulation results to verify the performance of the proposed scheme.
We consider the object detection task in autonomous vehicle (AV) systems \cite{zijian,fedcav2}.
This task involves three main steps: 1) distributed dataset generation and storage at each vehicle; 2) collecting labels from nearby road-side infrastructures (RSIs); 3) federated learning within a vehicle platoon.
The three steps are executed sequentially and may not occur in the same place.
The case of $K=3$ is simulated and the number of total UMWFL iterations is set to $15$.
The average precision at intersection of union (IoU) equal to 0.5 is used for performance evaluation.
Besides the proposed PAM algorithm, we also simulate a baseline scheme, which sets $\mathbf{F}=\mathbf{I}_N$ and optimize $\{t_k,r_k\}$ using \eqref{problemq} and \eqref{problemp}.

For dataset generation and storage, it can be realized via a physical-world testbed; but this involves high implementation costs.
Car Learning to Act (CARLA) \cite{carla} is a widely-accepted unreal-engine platform that provides complex urban driving scenarios and high 3D rendering quality such that the AV object detection can be prototyped in virtual-reality.
Hence, in this paper, all the datasets are generated by CARLA.
In particular, we employ CARLA to generate $29$ vehicles in the ``Town05'' map, among which $3$ are autonomous driving vehicles that can generate the point-cloud data at a frequency of $10$ frames/s.
The entire dataset consists of $600$ frames at each vehicle, where $100$ frames are used for training and $500$ frames are used for testing.
Fig.~2a illustrates the bird eye view of the simulated world and the locations of all vehicles.
The datasets at different vehicles are non-i.i.d. due to various fields of views.
For label collection, it is assumed that each vehicle can obtain the ground truth label from its nearby RSI.
The labels should satisfy (Karlsruhe Institute of Technology and Toyota Technological Institute) KITTI formats.
This can be realized by processing the raw data generated from CARLA using the Python scripts in \cite{zijian,shanfeng}.
Finally, to simulate the vehicle platoon federated learning, we need to specify the local model and the communication parameters.
Specifically, the sparsely embedded convolutional detection (SECOND) neural network \cite{second} is adopted for object detection.
Each round of local training processes $100$ frames and the Adam optimizer is adopted with a learning rate of $10^{-4}$.
The number of local updates is $E=1$.
The model training is implemented by PyTorch using Python 3.8 on a Linux server with an NVIDIA RTX 3090 GPU.
For the communication model, the pathloss of the user $k$ is set to $\varrho_{k}=-40\,\mathrm{dB}$, and $\mathbf{h}_{k}$ and $\mathbf{g}_{k}$ are generated according to $\mathcal{CN}(\mathbf{0},\varrho_{k}\mathbf{I}_N)$.
The edge server is equipped with $N=8$ antennas.
The power scaling factor $\gamma=1$ and the maximum transmit powers at users are $P_0=1~\mathrm{W}$ (i.e., $30\,\mathrm{dBm}$).
The noise powers at the server and users are set as $-80\,\mathrm{dBm}$, which capture the effects of thermal noise, receiver noise, and interference.

\textbf{Performance}.
Firstly, it can be seen from Fig.~2b that the training loss of the baseline scheme fluctuates violently at the iterations $7$ and $11$, because the model errors are non-negligible due to the strong multi-path fading at these communication rounds.
In contrast, via proper phase shifting at the server, such a multi-path fading is effectively alleviated by the proposed scheme.
Therefore, the training loss of the proposed UMWFL with the PAM algorithm converges to a significantly smaller value than that of the baseline scheme.
Secondly, as shown in Fig.~2c, the average precision region of the proposed PAM scheme encloses that of the baseline scheme, which implies that the PAM scheme achieves higher learning performance on all the testing datasets.
Particularly, the performance improvement is between $10\%$ and $50\%$.
Lastly, the qualitative performance of the proposed and baseline schemes are compared in Fig.~3.
It can be observed that the proposed UMWFL with PAM detects all the objects correctly.
However, the baseline scheme misses a turning car at vehicle~$1$, misses an occluded car at vehicle~$2$, and outputs an inaccurate detection at vehicle~$3$.
This demonstrates the non-robustness of wireless FL if no phase optimization is performed.
In contrast, if non-unit-modulus beamforming optimization is employed, significantly higher implementation costs are involved.
The proposed UMWFL can effectively strike a balance between the robustness of the FL and the cost at the server, which is a competitive technology in real edge FL implementations.

\begin{figure*}[!t]
\centering
 \includegraphics[width=0.99\textwidth]{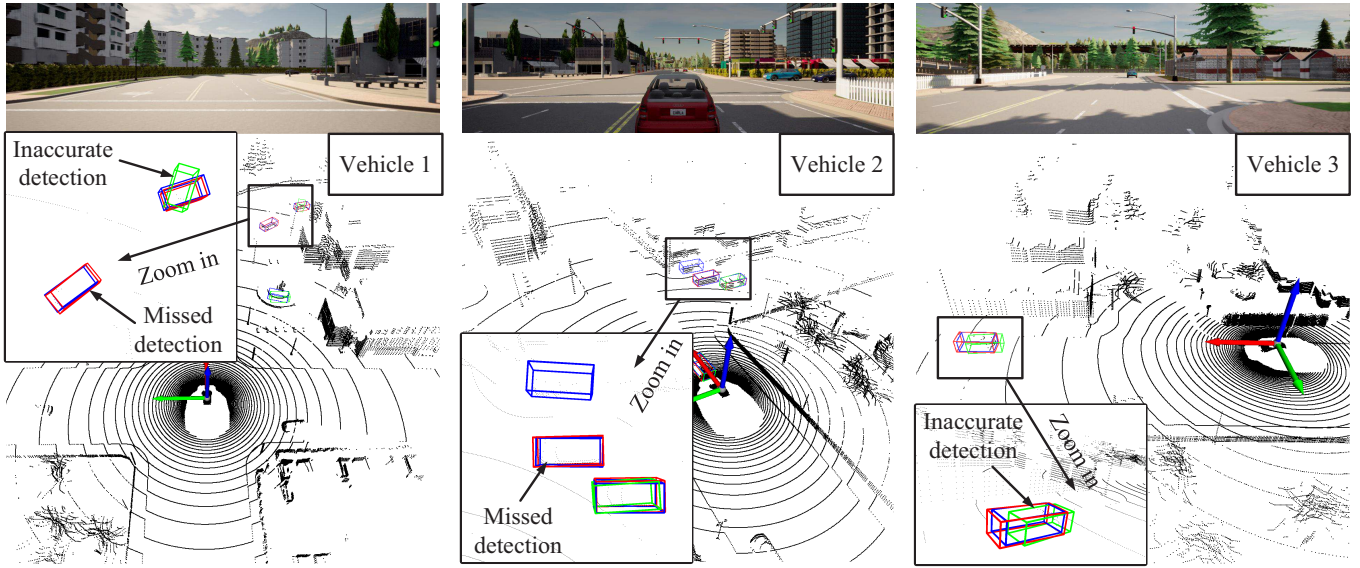}
  \caption{Detection results when $N=8$ with $K=3$. The red box is the ground truth; the blue box is from the proposed UMWFL scheme; the green box is from the fixed beamforming (baseline) scheme. Vehicle 1: The baseline scheme misses a turning car; Vehicle 2: The baseline scheme misses an occluded car; Vehicle 3: The baseline scheme outputs an inaccurate detection.}
\end{figure*}

\section{Conclusion}

This paper proposed the UMWFL scheme to support simultaneous transmission of local model parameters.
The training loss bound was derived and a low-complexity large-scale optimization algorithm was proposed to minimize the training loss.
The performance of the UMWFL framework with the proposed optimization algorithm was verified by using the CARLA autonomous driving platform.
It was found that the UMWFL scheme achieves robust FL performance and low communication costs.

\section{Acknowledgement}

This work was supported in part by the National Natural Science Foundation of China under Grant 62001203, in part by the Shenzhen Science and Technology Program under Grant RCB20200714114956153, and in part by the Shenzhen Fundamental Research Program under Grant JCYJ20190809142403596.

\end{document}